\documentclass[preprints,article,submit,pdftex,moreauthors]{Definitions/mdpi} 
\firstpage{1} 
\pubvolume{1}
\issuenum{1}
\articlenumber{0}
\pubyear{2026}
\copyrightyear{2026}
\datereceived{ } 
\daterevised{ } 
\dateaccepted{ } 
\datepublished{ } 

\usepackage{etoolbox}
\preto{\abstractkeywords}{\nolinenumbers}

\Title{Finite-Resolution Information from Collision Statistics}

\Author{Alexander J. Gates $^{1}$\orcidA{}*}

\AuthorNames{Alexander J. Gates}

\address{%
$^{1}$ \quad School of Data Science, University of Virginia}

\corres{Correspondence: agates@virginia.edu}

\abstract{
Collision statistics provide a finite-resolution view of information by measuring how often a fixed number of independent samples fall on the same state.
These directly countable quantities form the basis of integer-order R\'enyi entropies. 
Here, we use low-order R\'enyi entropies to approximate Shannon entropy and mutual information, while characterizing what is necessarily lost when only finitely many collision moments are used.
We derive interpolation-error bounds showing that approximation error is controlled by the shape of the R\'enyi entropy path near the Shannon point. 
We also separate this deterministic error from finite-sample estimation error: for fixed collision order, increasing sample size improves estimation of the finite-resolution target but does not eliminate its deterministic difference from Shannon entropy or mutual information.
Finally, we show that finite collision moments do not generally identify Shannon entropy, and that increasing collision order shifts sensitivity toward high-probability events. 
Numerical experiments illustrate the approximation--estimation tradeoff and compare collision-based approximations with plug-in and Miller--Madow estimators. 
The framework links collision counts, R\'enyi entropy, Shannon limits, and mutual information through a finite-resolution view of information, clarifying when low-order coincidence structure is informative and when irreducible information is lost.
}
\keyword{Shannon entropy; R\'enyi entropy; collision probability; mutual information; finite-sample estimation; power sums}

\begin{document}

\nolinenumbers

\section{Introduction}

Estimating Shannon entropy and mutual information from finite discrete samples remains a persistent problem, especially when the alphabet is large relative to the number of observations. 
The empirical plug-in estimator of entropy is biased downward, while the plug-in estimator of mutual information is biased upward, with both effects becoming pronounced in sparse regimes \cite{miller1955note, paninski2003estimation, paninski2004estimation}. 
Many estimators have been developed to address this problem, including Miller--Madow-type corrections \cite{miller1955note}, finite-sample bias corrections \cite{grassberger1988finite}, Bayesian estimators such as NSB \cite{nemenman2002entropy}, and minimax-optimal estimators based on approximation theory \cite{valiant2011estimating, jiao2015minimax, wu2016minimax}. 
These methods target Shannon entropy or mutual information directly.

Here, we take a complementary approach motivated by estimability. 
Rather than proposing another estimator of the Shannon functional, we formalize a family of finite-resolution information targets built from low-order collision moments. 
A collision moment measures the probability that two, three, or more independent samples take the same value. 
These probabilities are directly countable from data, have unbiased fixed-order estimators as $U$-statistics \cite{hoeffding1948class, serfling1980approximation}, and form the power sums underlying integer-order R\'enyi entropies \cite{renyi1961measures}. 
Related work has studied the estimation of R\'enyi entropies from samples, including the behavior of collision-based estimators for fixed entropy orders \cite{skorski2023renyi}. 
Our focus is different: we use these finite-order quantities as observations of the R\'enyi entropy path in order to construct and analyze finite-resolution approximations to Shannon entropy and mutual information.
Collision moments are also familiar in diversity measurement: the second-order collision probability is Simpson concentration, its complement is Simpson diversity, and its reciprocal is the order-two Hill number or effective number of types \cite{simpson1949measurement, hill1973diversity}. 
Thus collision statistics sit at the intersection of entropy estimation, R\'enyi information, and count-based diversity measurement.

This perspective leads to a different question from classical Shannon estimation. 
Instead of asking how to recover \(H(P)\) or \(I(X;Y)\) as accurately as possible from finite samples, we ask: \emph{how much of Shannon entropy is visible from finitely many low-order collision moments?}
This distinction matters because two sources of error are often conflated. 
The first is statistical estimation error: how accurately can a collision moment be estimated from data? 
The second is approximation error: how much information is lost when the population target itself is restricted to finitely many collision moments? 
For fixed collision order, increasing the sample size reduces the first error but not the second. 
The order of the collision moments determines the finite-resolution population target, while the sample size determines how accurately that target is estimated.

The connection to R\'enyi entropy provides the organizing structure for this finite-resolution view. 
For a distribution $P=(p_1,\ldots,p_m)$, the order-$k$ collision probability
\[
C_k(P)=\sum_i p_i^k
\]
defines the integer-order R\'enyi entropy
\[
H_k(P)=\frac{1}{1-k}\log C_k(P),
\]
while Shannon entropy is recovered by the continuous limit $k\to 1$ \cite{renyi1961measures, cover2006elements}. 
We therefore view the sequence $H_2(P),H_3(P),\ldots$ as a set of finite-resolution observations of the R\'enyi entropy path away from the Shannon point. 
Low-order extrapolations from this path yield approximations to Shannon entropy, and applying the same construction to the marginal and joint distributions yields corresponding approximations to mutual information.

The novelty of this paper is not a new minimax estimator, nor a claim that low-order collision statistics should replace direct Shannon estimators. 
Instead, the contribution is to make explicit the population targets defined by finite collision information and to separate their estimation error from their deterministic approximation error. 
For fixed order, a collision-based estimator converges to a finite-resolution R\'enyi-based functional, not generally to Shannon entropy or mutual information. 
Changing the collision orders changes the target itself. 
This makes collision approximations useful as interpretable finite-resolution summaries: they reveal how much information is captured by coincidence structure at a given order, and when irreducible information remains outside that finite moment representation.

The paper makes four contributions. 
First, we formulate low-order collision statistics as finite-resolution observations of the R\'enyi entropy path and use them to construct extrapolated approximations to Shannon entropy. 
Second, we extend the same construction to mutual information by combining marginal and joint entropy approximations. 
Third, we separate estimation error from approximation error, showing that fixed-order collision estimators converge to finite-resolution population targets rather than to Shannon quantities themselves. 
Fourth, we show that finite collision-moment truncations do not generally identify entropy, establishing an intrinsic limitation of any representation based only on finitely many low-order collisions. 
Numerical experiments illustrate these approximation--estimation tradeoffs and compare the resulting finite-resolution approximations with plug-in and Miller--Madow estimators.

The remainder of the paper is organized as follows. 
Section~\ref{sec:preliminaries} introduces collision probabilities, R\'enyi entropy, and the corresponding entropy contrast for mutual information.
Section~\ref{sec:low-order-approximations} develops low-order approximations to Shannon entropy and mutual information.
Section~\ref{sec:fixed-order-asymptotics} separates estimation error from approximation error. 
Section~\ref{sec:resolution} studies collision order as a concentration and resolution parameter. 
Section~\ref{sec:nonidentifiability} proves that finite collision moments do not identify entropy. 
Section~\ref{sec:experiments} presents numerical experiments, and Section~\ref{sec:discussion} discusses implications for entropy estimation and finite-resolution information representations.

\section{Preliminaries}
\label{sec:preliminaries}

Let $P=(p_1,\ldots,p_m)$ be a discrete probability distribution on a finite alphabet $\mathcal X=\{1,\ldots,m\}$. Throughout, we use logarithms in base $e$, so all entropies are measured in nats. The corresponding expressions in bits are obtained by dividing by $\log 2$.

\subsection{Collision Probabilities}

For $\alpha>0$, define the order-$\alpha$ collision probability, or power sum, of $P$ by
\[
C_\alpha(P)
=
\sum_{i=1}^m p_i^\alpha.
\]
When $\alpha=k$ is a positive integer, $C_k(P)$ has a direct sampling interpretation. 
If $X_1,\ldots,X_k$ are independent samples from $P$, then
\[
C_k(P)
=
\Pr\{X_1=X_2=\cdots=X_k\}.
\]
Thus $C_2(P)$ is the probability of a pairwise collision, $C_3(P)$ is the probability of a triplet collision, and so on.

For a sample $X_1,\ldots,X_N$ from $P$, let $n_i$ denote the number of observations in state $i$. 
For fixed integer $k\geq 2$, the without-replacement empirical collision statistic is
\[
\widehat C_k(P)
=
\frac{\sum_i \binom{n_i}{k}}{\binom{N}{k}}.
\]
Equivalently,
\[
\widehat C_k(P)
=
\binom{N}{k}^{-1}
\sum_{1\leq r_1<\cdots<r_k\leq N}
\mathbf 1\{X_{r_1}=\cdots=X_{r_k}\}.
\]
This is a $U$-statistic \cite{hoeffding1948class,serfling1980approximation}. 
Moreover, it is unbiased for $C_k(P)$, since each unordered $k$-tuple has probability $C_k(P)$ of taking a common value:
\[
\mathbb E[\widehat C_k(P)]
=
\binom{N}{k}^{-1}
\sum_{1\leq r_1<\cdots<r_k\leq N}
\Pr\{X_{r_1}=\cdots=X_{r_k}\}
=
C_k(P).
\]
The fixed-order nature of this estimator will be important below: as $N$ increases, $\widehat C_k(P)$ estimates the same population quantity $C_k(P)$ more accurately, but the target itself remains fixed by the order $k$.

\subsection{R\'enyi Entropy and the Shannon Limit}

For $\alpha>0$, $\alpha\neq 1$, the R\'enyi entropy \cite{renyi1961measures} of order $\alpha$ is
\[
H_\alpha(P)
=
\frac{1}{1-\alpha}\log C_\alpha(P)
=
\frac{1}{1-\alpha}
\log \sum_{i=1}^m p_i^\alpha.
\]
The Shannon entropy is recovered by continuity as $\alpha\to1$ \cite{cover2006elements}:
\[
H(P)
=
\lim_{\alpha\to1}H_\alpha(P)
=
-\sum_{i=1}^m p_i\log p_i.
\]
Indeed, since $C_1(P)=1$, applying l'H\^opital's rule gives
\[
\lim_{\alpha\to1}
\frac{\log C_\alpha(P)}{1-\alpha}
=
-\left.
\frac{d}{d\alpha}\log C_\alpha(P)
\right|_{\alpha=1}
=
-\sum_i p_i\log p_i.
\]

Several special cases of the R\'enyi family are useful for interpretation. 
The order-two entropy
\[
H_2(P)
=
-\log \sum_i p_i^2
\]
is often called collision entropy because it is the negative logarithm of the probability that two independent draws coincide \cite{ribeiro2021entropyuniverse}.
As $\alpha\to\infty$, $H_\alpha(P)$ converges to the min-entropy,
\[
H_\infty(P)
=
-\log \|P\|_\infty
=
-\log \max_i p_i.
\]
Thus the R\'enyi parameter controls how strongly the entropy emphasizes high-probability states.

\subsection{Joint Distributions and Mutual Information}

Let $X$ and $Y$ be discrete random variables with joint distribution
\[
P_{XY}=(p_{xy})_{x\in\mathcal X,\,y\in\mathcal Y}.
\]
Let $P_X=(p_x)$ and $P_Y=(p_y)$ denote the marginals. The Shannon mutual information is
\[
I(X;Y)
=
H(X)+H(Y)-H(X,Y) = \sum_{x,y}p_{xy}
\log
\frac{p_{xy}}{p_xp_y}.
\]

For $\alpha>0$, $\alpha\neq1$, define the marginal and joint collision probabilities
\[
C_\alpha(X)
=
\sum_x p_x^\alpha,
\qquad
C_\alpha(Y)
=
\sum_y p_y^\alpha,
\qquad
C_\alpha(X,Y)
=
\sum_{x,y}p_{xy}^\alpha.
\]
These define the corresponding R\'enyi entropies $H_\alpha(X)$, $H_\alpha(Y)$, and $H_\alpha(X,Y)$.

We define the R\'enyi entropy contrast
\[
K_\alpha(X;Y)
=
H_\alpha(X)+H_\alpha(Y)-H_\alpha(X,Y)=
\frac{1}{1-\alpha}
\left[
\log C_\alpha(X)
+
\log C_\alpha(Y)
-
\log C_\alpha(X,Y)
\right].
\]
This contrast is symmetric in $X$ and $Y$ and invariant under relabeling of the states. 
It also has the Shannon mutual information as its order-one limit:
\[
\lim_{\alpha\to1}K_\alpha(X;Y)
=
I(X;Y).
\]
We refer to $K_\alpha$ as an entropy contrast rather than as a R\'enyi mutual information. 
Several inequivalent R\'enyi generalizations of mutual information exist, with different axiomatic and operational properties \cite{sibson1969information, arimoto1977information, vanerven2014renyi}. 
The contrast used here is the direct analogue of the Shannon identity obtained by replacing Shannon entropies with R\'enyi entropies of common order $\alpha$.
We use $K_\alpha$ only as an entropy-contrast path whose Shannon limit is MI; we do not require it to satisfy axioms proposed for R\'enyi mutual informations.

\section{Low-Order Collision Approximations to Entropy and Mutual Information}
\label{sec:low-order-approximations}

We now use the R\'enyi entropy path to construct finite-order approximations to Shannon entropy. 
The key idea is to treat the integer-order values $H_2(P),H_3(P),H_4(P),\ldots$ as observations of the function $h(\alpha)=H_\alpha(P)$ away from the Shannon point $\alpha=1$. 
Low-order approximations to $H(P)=h(1)$ and their errors are then obtained by extrapolating this path back to $\alpha=1$.

\subsection{Approximating Shannon Entropy}

Let $P=(p_1,\ldots,p_m)$ be a discrete distribution and define $h(\alpha)=H_\alpha(P).$
Since $h(1)=H(P)$, we approximate Shannon entropy by interpolating $h$ at integer orders $\alpha=2,\ldots,r+1$ and evaluating the interpolating polynomial at $\alpha=1$.

For an integer $r\geq 1$, let $L_{r-1}(\alpha)$ be the unique polynomial of degree at most $r-1$ satisfying
\[
L_{r-1}(j+1)=H_{j+1}(P),
\qquad
j=1,\ldots,r.
\]
We define the order-$r$ collision approximation to Shannon entropy by $\widetilde H_r(P) = L_{r-1}(1)$.

Using the Lagrange form of the interpolating polynomial, this approximation can be written explicitly as
\[
\widetilde H_r(P)
=
\sum_{j=1}^{r}
(-1)^{j-1}
\binom{r}{j}
H_{j+1}(P) =
-\sum_{j=1}^{r}
(-1)^{j-1}
\binom{r}{j}
\frac{1}{j}
\log C_{j+1}(P).
\]
Thus the first few approximations are
\[
\widetilde H_1(P)=H_2(P),
\]
\[
\widetilde H_2(P)=2H_2(P)-H_3(P) = -2\log C_2(P)+\frac12\log C_3(P),
\]
and
\[
\widetilde H_3(P)=3H_2(P)-3H_3(P)+H_4(P) = -3\log C_2(P) + \frac32\log C_3(P) - \frac13\log C_4(P).
\]
The subscript $r$ indicates the number of integer R\'enyi entropies used in the approximation. 
Equivalently, $\widetilde H_r$ uses collision probabilities up to order $r+1$.

\subsection{Approximation Error}

The approximation error is deterministic and depends on the regularity of the R\'enyi entropy path near $\alpha=1$. 
The following result is a direct consequence of the interpolation remainder.

\begin{Theorem}[Interpolation error from low-order collision entropies]
\label{thm:entropy-interpolation}
Let $P$ be a discrete distribution and suppose that $h(\alpha)=H_\alpha(P)$ is $r$ times continuously differentiable on $[1,r+1]$. 
Then there exists $\xi\in(1,r+1)$ such that
\[
H(P)-\widetilde H_r(P)
=
(-1)^r h^{(r)}(\xi).
\]
Consequently,
\[
\left|H(P)-\widetilde H_r(P)\right|
\leq
\sup_{\alpha\in[1,r+1]}
\left|h^{(r)}(\alpha)\right|.
\]
\end{Theorem}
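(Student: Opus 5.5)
The plan is to apply the standard Cauchy remainder formula for polynomial interpolation, evaluated at the extrapolation point $\alpha=1$. The nodes are $\alpha_j=j+1$, $j=1,\ldots,r$, that is, $2,3,\ldots,r+1$. $L_{r-1}$ is the unique polynomial of degree at most $r-1$ interpolating $h$ at these $r$ nodes. Since $h$ is $C^r$ on $[1,r+1]$, the classical remainder theorem gives, for every $\alpha\in[1,r+1]$, some $\xi$ in the open interval spanned by $\alpha$ and the nodes such that
\[
h(\alpha)-L_{r-1}(\alpha)=\frac{h^{(r)}(\xi)}{r!}\prod_{j=1}^{r}\bigl(\alpha-(j+1)\bigr).
\]

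The key steps are as follows.

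\textbf{Proving the remainder formula at $\alpha=1$.} I will prove it directly via Rolle's theorem, since the evaluation point lies outside the node range and I want to be careful about where $\xi$ lands. Define
\[
g(t)=h(t)-L_{r-1}(t)-c\,\omega(t), \qquad \omega(t)=\prod_{j=1}^r\bigl(t-(j+1)\bigr),
\]
with $c$ chosen so that $g(1)=0$. This is possible because $\omega(1)\neq 0$. Then $g$ has $r+1$ distinct zeros $1,2,\ldots,r+1$ in $[1,r+1]$. Applying Rolle's theorem $r$ times (this requires only $g\in C^{r-1}$ with $g^{(r-1)}$ differentiable on the interior, which $C^r$ provides) yields $\xi\in(1,r+1)$ with $g^{(r)}(\xi)=0$. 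Since $L_{r-1}^{(r)}\equiv 0$ and $\omega^{(r)}\equiv r!$, this gives $c=h^{(r)}(\xi)/r!$.

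\textbf{Computing the node polynomial.} At $\alpha=1$,
\[
\omega(1)=\prod_{j=1}^r(-j)=(-1)^r r!.
\]
Hence
\[
H(P)-\widetilde H_r(P)=h(1)-L_{r-1}(1)=(-1)^r h^{(r)}(\xi),
\]
using $h(1)=H(P)$. This identity holds by the paper's definition of $h$ at $\alpha=1$ as the Shannon limit, so $h$ is continuous there.

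\textbf{The bound.} Take absolute values and bound $|h^{(r)}(\xi)|$ by its supremum over $[1,r+1]$. The supremum is finite by continuity of $h^{(r)}$ on the compact interval.

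The only delicate point is the first step. The Rolle argument must be run on the closed interval $[1,r+1]$, which includes the extrapolation point $\alpha=1$ where $h$ is defined only by the limit. This is why the hypothesis is phrased as $C^r$ on $[1,r+1]$ rather than merely on $[2,r+1]$. Given that hypothesis, the argument is routine.
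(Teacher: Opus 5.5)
Your proposal is correct and follows the paper's proof: both apply the interpolation remainder at the extrapolation point $\alpha=1$ with nodes $2,\ldots,r+1$, and both use $\prod_{j=1}^{r}(1-(j+1))=(-1)^r r!$. The only difference is that you derive the remainder formula yourself, via the auxiliary function $g$ and an $r$-fold Rolle argument on $[1,r+1]$, whereas the paper simply cites it; your derivation confirms explicitly that $\xi\in(1,r+1)$ even though $\alpha=1$ lies outside the node range.
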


\begin{proof}
Let $L_{r-1}$ be the degree-$(r-1)$ interpolating polynomial for $h$ at the points $2,3,\ldots,r+1$. The interpolation remainder gives
\[
h(1)-L_{r-1}(1)
=
\frac{h^{(r)}(\xi)}{r!}
\prod_{j=1}^{r}\left(1-(j+1)\right)
\]
for some $\xi\in(1,r+1)$. Since
\[
\prod_{j=1}^{r}\left(1-(j+1)\right)
=
\prod_{j=1}^{r}(-j)
=
(-1)^r r!,
\]
we obtain
\[
h(1)-L_{r-1}(1)
=
(-1)^r h^{(r)}(\xi).
\]
Using $h(1)=H(P)$ and $L_{r-1}(1)=\widetilde H_r(P)$ proves the stated identity. Taking absolute values gives the bound.
\end{proof}

For the pair--triplet approximation, $\widetilde H_2(P)=2H_2(P)-H_3(P)$, Theorem~\ref{thm:entropy-interpolation} gives
\[
\left|H(P)-\widetilde H_2(P)\right|
\leq
\sup_{\alpha\in[1,3]}
\left|h''(\alpha)\right|.
\]
Likewise, for the pair--triplet--quartet approximation, $\widetilde H_3(P)=3H_2(P)-3H_3(P)+H_4(P)$, we obtain
\[
\left|H(P)-\widetilde H_3(P)\right|
\leq
\sup_{\alpha\in[1,4]}
\left|h^{(3)}(\alpha)\right|.
\]

Theorem~\ref{thm:entropy-interpolation} gives a geometric interpretation of the approximation. 
The quantities $H_2(P),H_3(P), \ldots$ $H_{r+1}(P)$ are values of the R\'enyi entropy path away from the Shannon point $\alpha=1$. 
The approximation $\widetilde H_r(P)$ extrapolates this path back to $\alpha=1$ using a polynomial fitted to those integer-order values. 
The error is therefore controlled by how much the R\'enyi entropy path bends over the interval $[1,r+1]$. 
When the path is nearly polynomial of degree at most $r-1$ on this interval, the approximation is accurate; when the path has substantial higher-order variation, the extrapolation can retain non-negligible deterministic error.

The bound in Theorem~\ref{thm:entropy-interpolation} should be read as a structural error statement rather than a sharp distribution-free guarantee. 
It identifies the source of approximation error: the curvature and higher-order variation of the R\'enyi entropy path between the Shannon point and the integer collision orders used for extrapolation. 
By itself, however, the bound does not imply that a fixed low-order approximation is uniformly accurate across all distributions. 
Sharper error guarantees would require additional assumptions on the distribution, such as bounded support with controlled minimum mass, a specified tail shape, or regularity conditions on the R\'enyi entropy path. 
In the absence of such assumptions, the approximation should be interpreted as a finite-resolution summary whose accuracy depends on the population distribution.

Finally, the error in Theorem~\ref{thm:entropy-interpolation} is not a finite-sample effect.
Rather, it is a property of the population distribution and of the finite set of collision orders used in the approximation. 
Increasing the sample size can reduce estimation error for the collision probabilities, but it cannot remove this interpolation error unless additional collision orders are incorporated or the approximation scheme is changed.

\subsection{Extension to Mutual Information}

The extension to mutual information is direct. 
Let $X$ and $Y$ be discrete random variables with joint distribution $P_{XY}$ and marginals $P_X$ and $P_Y$. 
Define
\[
\widetilde I_r(X;Y)
=
\widetilde H_r(X)
+
\widetilde H_r(Y)
-
\widetilde H_r(X,Y).
\]
Using the definition of $\widetilde H_r$, this can be written as
\[
\widetilde I_r(X;Y)
=
\sum_{j=1}^{r}
(-1)^{j-1}
\binom{r}{j}
\left[
H_{j+1}(X)+H_{j+1}(Y)-H_{j+1}(X,Y)
\right].
\]
Therefore,
\[
\widetilde I_r(X;Y)
=
\sum_{j=1}^{r}
(-1)^{j-1}
\binom{r}{j}
K_{j+1}(X;Y),
\]
where
\[
K_\alpha(X;Y)
=
H_\alpha(X)+H_\alpha(Y)-H_\alpha(X,Y)
\]
is the R\'enyi entropy contrast introduced in Section~\ref{sec:preliminaries}.

The first few mutual-information approximations are
\[
\widetilde I_1(X;Y)=K_2(X;Y),
\]
\[
\widetilde I_2(X;Y)=2K_2(X;Y)-K_3(X;Y),
\]
and
\[
\widetilde I_3(X;Y)=3K_2(X;Y)-3K_3(X;Y)+K_4(X;Y).
\]
These use only low-order marginal and joint collision probabilities.

The interpolation theorem for mutual information follows immediately from the entropy result, or equivalently by applying the same interpolation argument to the R\'enyi contrast path.

\begin{Corollary}[Interpolation error for low-order collision mutual information]
\label{cor:mi-interpolation}
Let $g(\alpha)=K_\alpha(X;Y)$ and suppose that $g$ is $r$ times continuously differentiable on $[1,r+1]$. Then there exists $\xi\in(1,r+1)$ such that
\[
I(X;Y)-\widetilde I_r(X;Y)
=
(-1)^r g^{(r)}(\xi).
\]
Consequently,
\[
\left|I(X;Y)-\widetilde I_r(X;Y)\right|
\leq
\sup_{\alpha\in[1,r+1]}
\left|g^{(r)}(\alpha)\right|.
\]
\end{Corollary}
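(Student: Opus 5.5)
The plan is to repeat the argument of Theorem~\ref{thm:entropy-interpolation}, applied directly to the contrast path $g(\alpha)=K_\alpha(X;Y)$. Going through the contrast path, rather than adding three entropy statements, avoids a subtle issue. Applying the theorem separately to $H_\alpha(X)$, $H_\alpha(Y)$ and $H_\alpha(X,Y)$ would give three different intermediate points $\xi_X,\xi_Y,\xi_{XY}$. These cannot in general be merged into a single $\xi$ for $g^{(r)}$. The hypothesis of the corollary is also stated for $g$ alone. So we work with $g$ from the start.

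\textbf{Step 1: $g$ is a continuous path with $g(1)=I(X;Y)$.} For $\alpha\neq 1$ we have $g(\alpha)=H_\alpha(X)+H_\alpha(Y)-H_\alpha(X,Y)$. At $\alpha=1$ we set
\[
g(1)=\lim_{\alpha\to1}K_\alpha(X;Y)=H(X)+H(Y)-H(X,Y)=I(X;Y),
\]
using the Shannon limit of each R\'enyi entropy from Section~\ref{sec:preliminaries}. The assumption that $g$ is $C^r$ on $[1,r+1]$ is understood for this continuous extension.

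\textbf{Step 2: identify the interpolant at $\alpha=1$.} Let $M_{r-1}$ be the unique polynomial of degree at most $r-1$ with $M_{r-1}(j+1)=g(j+1)=K_{j+1}(X;Y)$ for $j=1,\ldots,r$. Interpolation is linear in the data, so $M_{r-1}=L^X_{r-1}+L^Y_{r-1}-L^{XY}_{r-1}$, where these are the interpolants of the three entropy paths. Evaluating at $1$ and using $L_{r-1}(1)=\widetilde H_r$ for each gives $M_{r-1}(1)=\widetilde I_r(X;Y)$. Equivalently, the Lagrange weights at $\alpha=1$ are $(-1)^{j-1}\binom{r}{j}$, and these reproduce the displayed formula $\widetilde I_r=\sum_j(-1)^{j-1}\binom{r}{j}K_{j+1}$.

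\textbf{Step 3: apply the interpolation remainder.} The standard Cauchy remainder for a $C^r$ function interpolated at $2,\ldots,r+1$, evaluated at the point $1$, gives some $\xi\in(1,r+1)$ with
\[
g(1)-M_{r-1}(1)=\frac{g^{(r)}(\xi)}{r!}\prod_{j=1}^r(-j)=(-1)^r g^{(r)}(\xi).
\]
This is exactly the computation in the proof of Theorem~\ref{thm:entropy-interpolation}. Taking absolute values and bounding $|g^{(r)}(\xi)|$ by its supremum over $[1,r+1]$ gives the inequality.

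\textbf{Main obstacle: the regularity of the remainder.} The Rolle-type proof of the remainder needs $g\in C^{r-1}[1,r+1]$ with $g^{(r)}$ existing on the open interval. The hypothesis supplies more than this, so nothing further is needed. The one point to check is that $g$ is genuinely smooth through $\alpha=1$, where the factor $1/(1-\alpha)$ is singular. Since the corollary assumes this regularity, we only note why it is natural. Each $\log C_\alpha$ is real-analytic in $\alpha$ and vanishes at $\alpha=1$, so $\log C_\alpha/(1-\alpha)$ extends analytically across $\alpha=1$. Hence $g$ is in fact analytic on $(0,\infty)$.
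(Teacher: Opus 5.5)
Your proposal is correct and follows the paper's proof: both apply the interpolation remainder directly to the contrast path $g(\alpha)=K_\alpha(X;Y)$, using $g(1)=I(X;Y)$ and the fact that the degree-$(r-1)$ interpolant of $g$ at $2,\ldots,r+1$ evaluates at $\alpha=1$ to $\widetilde I_r(X;Y)$. Your two added remarks are accurate refinements of the paper's one-line argument: adding three entropy-level remainders would not produce a single $\xi$, and $g$ is in fact real-analytic, so the smoothness hypothesis holds automatically.
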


\begin{proof}
Apply Theorem~\ref{thm:entropy-interpolation} to the contrast path
\[
g(\alpha)=K_\alpha(X;Y).
\]
Since $g(1)=I(X;Y)$ and since $\widetilde I_r(X;Y)$ is the value at $\alpha=1$ of the degree-$(r-1)$ polynomial interpolating $g$ at $\alpha=2,\ldots,r+1$, the same interpolation remainder gives the result.
\end{proof}

Thus the mutual-information approximation is not a separate construction. 
It is the entropy approximation applied to the marginal and joint distributions. 
The population quantity $\widetilde I_r(X;Y)$ depends only on the finite collection of collision probabilities
\[
\{C_k(X),C_k(Y),C_k(X,Y): k=2,\ldots,r+1\}.
\]

In collision-probability form,
\[
K_k(X;Y)
=
\frac{1}{1-k}
\left[
\log C_k(X)+\log C_k(Y)-\log C_k(X,Y)
\right],
\]
so
\[
\widetilde I_r(X;Y)
=
-\sum_{j=1}^{r}
(-1)^{j-1}
\binom{r}{j}
\frac{1}{j}
\left[
\log C_{j+1}(X)
+
\log C_{j+1}(Y)
-
\log C_{j+1}(X,Y)
\right].
\]
This expression makes explicit that $\widetilde I_r(X;Y)$ uses low-order coincidence information from the two marginals and the joint distribution.

\subsection{Independence Baseline}

The collision-based mutual-information approximations inherit the independence baseline of mutual information at the population level. 
If $X$ and $Y$ are independent, then
\[
p_{xy}=p_xp_y.
\]
Therefore, for every $\alpha>0$,
\[
C_\alpha(X,Y)
=
\sum_{x,y}(p_xp_y)^\alpha
=
\left(\sum_x p_x^\alpha\right)
\left(\sum_y p_y^\alpha\right)
=
C_\alpha(X)C_\alpha(Y).
\]
It follows that
\[
K_\alpha(X;Y)=0
\]
for every $\alpha\neq1$, and hence
\[
\widetilde I_r(X;Y)=0
\]
for every approximation order $r$.

This property should be interpreted carefully. 
The contrast $K_\alpha$ is not intended as a full R\'enyi mutual information in the axiomatic sense; several inequivalent R\'enyi generalizations of mutual information exist, with different operational and mathematical properties. 
Here, $K_\alpha$ is used because it is the direct analogue of the Shannon identity $I(X;Y)=H(X)+H(Y)-H(X,Y)$ obtained by replacing Shannon entropies with R\'enyi entropies of common order $\alpha$. 
Its role is to connect low-order collision statistics to the Shannon mutual information limit.

For fixed $r$, $\widetilde I_r(X;Y)$ is therefore a finite-resolution dependence functional. 
It agrees with the independence baseline, uses only low-order collision information, and approximates Shannon mutual information when the R\'enyi contrast path is sufficiently regular near $\alpha=1$.

\subsection{Finite Moment Truncation}

The finite-order nature of $\widetilde H_r$ and $\widetilde I_r$ is both a strength and a limitation.
It is a strength because the required ingredients are low-order collision probabilities, which are directly estimable from coincidence counts. 
It is a limitation because finitely many collision probabilities do not generally determine Shannon entropy.

For a distribution on $m$ symbols, sufficiently many power sums determine the distribution up to permutation. 
However, a finite truncation $C_2(P),\ldots,C_{r+1}(P)$ does not generally identify $P$ when the alphabet is large enough. 
Consequently, it cannot generally determine $H(P)$. 
We return to this point in Section~\ref{sec:nonidentifiability}, where we prove a non-identifiability result for finite collision-moment truncations.
The purpose of the low-order approximation is therefore not to eliminate the need for Shannon entropy estimators. 
Rather, it provides a controlled way to ask what part of Shannon entropy and mutual information is visible from a finite number of collision statistics, how that information changes as additional collision orders are included, and how estimation error differs from approximation error.

\section{Fixed-Order Asymptotics}
\label{sec:fixed-order-asymptotics}

The preceding section defined population approximations $\widetilde H_r(P)$ and $\widetilde I_r(X;Y)$ from finitely many collision orders. 
We now consider their empirical counterparts. 
The main point is that, for fixed approximation order, increasing the sample size improves estimation of the finite-resolution target, but it does not remove the deterministic approximation error relative to Shannon entropy or mutual information.

\subsection{Empirical Collision Entropies}

Let $X_1,\ldots,X_N$ be i.i.d. samples from a discrete distribution $P$. 
For fixed integer $k\geq2$, let
\[
\widehat C_k(P)
=
\frac{\sum_i \binom{n_i}{k}}{\binom{N}{k}}
\]
be the empirical collision probability. The corresponding empirical collision entropy is
\[
\widehat H_k(P)
=
\frac{1}{1-k}\log \widehat C_k(P),
\]
whenever $\widehat C_k(P)>0$.

For a fixed approximation order $r$, define the empirical low-order entropy approximation by
\[
\widehat{\widetilde H}_r(P)
=
\sum_{j=1}^{r}
(-1)^{j-1}
\binom{r}{j}
\widehat H_{j+1}(P).
\]
Equivalently,
\[
\widehat{\widetilde H}_r(P)
=
-\sum_{j=1}^{r}
(-1)^{j-1}
\binom{r}{j}
\frac{1}{j}
\log \widehat C_{j+1}(P).
\]
This estimator uses empirical collision probabilities up to order $r+1$.

Since $\widehat C_k(P)$ is an unbiased and consistent estimator of $C_k(P)$ for fixed $k$, the continuous mapping theorem gives
\[
\widehat H_k(P)\xrightarrow{p}H_k(P)
\]
provided $C_k(P)>0$. 
Therefore,
\[
\widehat{\widetilde H}_r(P)
\xrightarrow{p}
\widetilde H_r(P)
\]
for fixed $r$ as $N\to\infty$.

Thus $\widehat{\widetilde H}_r(P)$ is consistent for the finite-order approximation $\widetilde H_r(P)$, not necessarily for the Shannon entropy $H(P)$. 
This gives the decomposition
\[
\widehat{\widetilde H}_r(P)-H(P)
=
\underbrace{
\widehat{\widetilde H}_r(P)-\widetilde H_r(P)
}_{\text{estimation error}}
+
\underbrace{
\widetilde H_r(P)-H(P)
}_{\text{approximation error}}.
\]
The first term vanishes in probability as $N\to\infty$ for fixed $r$. 
The second term is the interpolation error studied in Theorem~\ref{thm:entropy-interpolation} and remains fixed for a given population distribution and approximation order.

\subsection{Asymptotic Behavior for Fixed Order}

The fixed-order estimation error has standard $U$-statistic behavior. 
For fixed $k$, the empirical collision probability $\widehat C_k(P)$ is a bounded $U$-statistic. 
Under the usual nondegeneracy conditions,
\[
\sqrt{N}\left(\widehat C_k(P)-C_k(P)\right)
\Rightarrow
\mathcal N(0,\sigma_k^2),
\]
where $\sigma_k^2$ is the asymptotic variance determined by the first projection of the $U$-statistic kernel.

Applying the delta method to the map
\[
c\mapsto \frac{1}{1-k}\log c
\]
gives
\[
\sqrt{N}\left(\widehat H_k(P)-H_k(P)\right)
\Rightarrow
\mathcal N(0,\eta_k^2),
\]
with
\[
\eta_k^2
=
\frac{\sigma_k^2}{(1-k)^2 C_k(P)^2}.
\]
Because $\widehat{\widetilde H}_r(P)$ is a finite linear combination of the empirical entropies $\widehat H_2(P),\ldots,\widehat H_{r+1}(P)$, it follows that
\[
\sqrt{N}
\left(
\widehat{\widetilde H}_r(P)-\widetilde H_r(P)
\right)
\Rightarrow
\mathcal N(0,\eta_r^2),
\]
for some finite asymptotic variance $\eta_r^2$, again under the corresponding nondegeneracy conditions.

Combining this with the deterministic interpolation error gives
\[
\widehat{\widetilde H}_r(P)-H(P)
=
O_p(N^{-1/2})
+
\bigl(\widetilde H_r(P)-H(P)\bigr).
\]
This expression separates the two sources of error. 
The stochastic term decreases with sample size. 
The deterministic term depends on the finite collision orders used and is not reduced by additional samples.

\subsection{Extension to Mutual Information}

The same decomposition applies to mutual information. 
Let $(X_1,Y_1),\ldots,(X_N,Y_N)$ be i.i.d. samples from a joint distribution $P_{XY}$. 
Define empirical collision probabilities for the marginals and the joint distribution:
\[
\widehat C_k(X)
=
\frac{\sum_x \binom{n_x}{k}}{\binom{N}{k}},
\qquad
\widehat C_k(Y)
=
\frac{\sum_y \binom{n_y}{k}}{\binom{N}{k}},
\]
and
\[
\widehat C_k(X,Y)
=
\frac{\sum_{x,y} \binom{n_{xy}}{k}}{\binom{N}{k}}.
\]
The empirical R\'enyi entropy contrast is
\[
\widehat K_k(X;Y)
=
\widehat H_k(X)+\widehat H_k(Y)-\widehat H_k(X,Y),
\]
or equivalently
\[
\widehat K_k(X;Y)
=
\frac{1}{1-k}
\left[
\log \widehat C_k(X)
+
\log \widehat C_k(Y)
-
\log \widehat C_k(X,Y)
\right].
\]

For fixed approximation order $r$, define
\[
\widehat{\widetilde I}_r(X;Y)
=
\sum_{j=1}^{r}
(-1)^{j-1}
\binom{r}{j}
\widehat K_{j+1}(X;Y).
\]
Then, for fixed $r$,
\[
\widehat{\widetilde I}_r(X;Y)
\xrightarrow{p}
\widetilde I_r(X;Y)
\]
as $N\to\infty$, provided the relevant population collision probabilities are nonzero.

Hence the total error decomposes as
\[
\widehat{\widetilde I}_r(X;Y)-I(X;Y)
=
\underbrace{
\widehat{\widetilde I}_r(X;Y)-\widetilde I_r(X;Y)
}_{\text{estimation error}}
+
\underbrace{
\widetilde I_r(X;Y)-I(X;Y)
}_{\text{approximation error}}.
\]
For fixed $r$, the first term is stochastic and decreases at the usual $N^{-1/2}$ rate under standard nondegeneracy conditions. 
The second term is deterministic and is controlled by the interpolation error in Corollary~\ref{cor:mi-interpolation}.

Thus,
\[
\widehat{\widetilde I}_r(X;Y)-I(X;Y)
=
O_p(N^{-1/2})
+
\bigl(\widetilde I_r(X;Y)-I(X;Y)\bigr).
\]

\subsection{Interpretation of asymptotic behavior}

The fixed-order asymptotic behavior clarifies the role of sample size. Increasing $N$ makes empirical collision probabilities more accurate estimates of their population counterparts. It does not make a fixed collision order more Shannon-like. For example,
\[
\widehat{\widetilde I}_2(X;Y)
=
2\widehat K_2(X;Y)-\widehat K_3(X;Y)
\]
converges to
\[
\widetilde I_2(X;Y)
=
2K_2(X;Y)-K_3(X;Y),
\]
not necessarily to $I(X;Y)$.

Consequently, a fixed-order collision approximation can exhibit a persistent bias relative to Shannon entropy or mutual information even when the sample size is large. 
This persistent bias is not a failure of estimation. 
It is the deterministic approximation error associated with using finitely many collision orders. 
To reduce this component, one must change the approximation itself, for example by incorporating additional collision orders or by using a different extrapolation scheme.

This distinction is central to the finite-resolution interpretation. 
The sample size controls statistical precision, while the collision orders determine the population target. 
In empirical comparisons, collision-based approximations should therefore be evaluated in two ways: relative to the Shannon quantity, which measures total approximation-plus-estimation error, and relative to their finite-order population targets, which isolates statistical estimation error.

\section{Collision Order as a Concentration and Resolution Parameter}
\label{sec:resolution}

The collision order does more than index a family of estimators. It also changes which parts of the distribution are emphasized. 
Low orders aggregate probability mass broadly across the support, while high orders increasingly concentrate on the largest probabilities. 
This section makes that statement precise and interprets the order $k$ as a resolution parameter.

Let $P=(p_1,\ldots,p_m)$ be a discrete distribution, with probabilities ordered as
\[
p_1\geq p_2\geq \cdots \geq p_m\geq 0.
\]
Recall that the order-$k$ collision probability is
\[
C_k(P)=\sum_{i=1}^m p_i^k.
\]
For larger $k$, the terms $p_i^k$ magnify differences among probabilities: high-probability states remain visible, while lower-probability states are suppressed.

\subsection{Convergence Toward the Largest Probability}

The following elementary result shows that the collision probability becomes dominated by the largest mass as the order increases.

\begin{Lemma}[Collision roots converge to the maximum mass]
\label{lem:ck-root}
Let $P=(p_1,\ldots,p_m)$ be a discrete distribution and let
\[
p_{\max}=\max_i p_i.
\]
Then
\[
\lim_{k\to\infty} C_k(P)^{1/k}=p_{\max}.
\]
\end{Lemma}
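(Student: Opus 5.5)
The plan is a squeeze (sandwich) argument that bounds $C_k(P)^{1/k}$ above and below by quantities converging to $p_{\max}$. Since $P$ is a probability distribution on a finite alphabet, $p_{\max}>0$ and $m$ is fixed. We use no earlier result beyond the definition $C_k(P)=\sum_{i=1}^m p_i^k$.

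\emph{Lower bound.} Every term in the sum is nonnegative, and one term equals $p_{\max}^k$. Hence
\[
p_{\max}^k \le C_k(P),
\]
and taking $k$-th roots gives $p_{\max}\le C_k(P)^{1/k}$ for every $k\ge 1$.

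\emph{Upper bound.} Each term satisfies $p_i^k\le p_{\max}^k$, and there are $m$ terms. Hence
\[
C_k(P)\le m\,p_{\max}^k,
\qquad\text{so}\qquad
C_k(P)^{1/k}\le m^{1/k}p_{\max}.
\]
Since $m$ does not depend on $k$, we have $m^{1/k}=e^{(\log m)/k}\to1$ as $k\to\infty$.

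\emph{Conclusion.} The two bounds give
\[
p_{\max}\le C_k(P)^{1/k}\le m^{1/k}p_{\max},
\]
and both sides tend to $p_{\max}$, so the limit exists and equals $p_{\max}$.

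There is no real obstacle. The one point to state explicitly is that the factor $m$ is fixed (finite alphabet), which is what makes $m^{1/k}\to1$. For countably infinite support, the $m$ in the upper bound would be replaced by the estimate
\[
C_k(P)\le p_{\max}^{k-1}\sum_i p_i = p_{\max}^{k-1},
\]
which gives $C_k(P)^{1/k}\le p_{\max}^{(k-1)/k}\to p_{\max}$. This version avoids any dependence on $m$ and also works in the finite case, so it may be cleaner to use it directly.

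As a remark, the lemma combined with $H_k(P)=\frac{k}{1-k}\log C_k(P)^{1/k}$ immediately yields $H_k(P)\to-\log p_{\max}=H_\infty(P)$. This is presumably how the lemma will be used next.
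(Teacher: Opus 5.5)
Your proof is correct and matches the paper's argument: the sandwich $p_{\max}\le C_k(P)^{1/k}\le m^{1/k}p_{\max}$, with $m^{1/k}\to1$. Your alternative upper bound $C_k(P)\le p_{\max}^{k-1}$ is also valid. It removes the dependence on the alphabet size $m$, so the lemma extends to countable support.
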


\begin{proof}
Since $p_{\max}^k$ is one term in the sum defining $C_k(P)$,
\[
p_{\max}^k
\leq
C_k(P).
\]
Also, since each $p_i\leq p_{\max}$,
\[
C_k(P)
=
\sum_i p_i^k
\leq
m p_{\max}^k.
\]
Taking $k$-th roots gives
\[
p_{\max}
\leq
C_k(P)^{1/k}
\leq
m^{1/k}p_{\max}.
\]
Since $m^{1/k}\to1$ as $k\to\infty$, the result follows.
\end{proof}

This result is the collision-probability form of the standard convergence of R\'enyi entropy to min-entropy:
\[
H_k(P)
=
\frac{1}{1-k}\log C_k(P)
\longrightarrow
-\log p_{\max}
=
H_\infty(P).
\]
Thus, increasing $k$ moves the entropy functional away from the Shannon regime and toward a quantity determined by the largest atom of the distribution.

\subsection{Tail Suppression}

The preceding lemma describes the limiting behavior as $k\to\infty$. The next bound quantifies how quickly lower-probability states lose influence.

For $s<m$, define the top-$s$ contribution to the order-$k$ collision probability as
\[
R_s(k)
=
\frac{\sum_{i=1}^{s}p_i^k}{\sum_{i=1}^{m}p_i^k}.
\]
Thus $R_s(k)$ is the fraction of the collision probability contributed by the $s$ largest states.

\begin{Lemma}[Exponential suppression of lower-probability states]
\label{lem:tail-suppression}
Let $P=(p_1,\ldots,p_m)$ be ordered so that
\[
p_1\geq p_2\geq \cdots \geq p_m.
\]
For any $s<m$ with $p_s>0$,
\[
1-R_s(k)
\leq
\frac{m-s}{s}
\left(\frac{p_{s+1}}{p_s}\right)^k.
\]
In particular, if $p_s>p_{s+1}$, the contribution of the tail beyond the top $s$ states decays exponentially in $k$.
\end{Lemma}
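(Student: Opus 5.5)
The plan is to write the tail fraction directly as a ratio and bound its numerator and denominator separately. By definition,
\[
1-R_s(k)=\frac{\sum_{i=s+1}^{m}p_i^k}{\sum_{i=1}^{m}p_i^k},
\]
so it suffices to find an upper bound for the tail sum and a lower bound for the full collision probability $C_k(P)$. This is the same style of argument as in Lemma~\ref{lem:ck-root}: compare every term with a single extreme probability.

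For the numerator, use the ordering $p_{s+1}\geq p_{s+2}\geq\cdots\geq p_m$. Each tail term then satisfies $p_i^k\leq p_{s+1}^k$, and there are $m-s$ of them, so
\[
\sum_{i=s+1}^m p_i^k\leq (m-s)p_{s+1}^k.
\]

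For the denominator, drop the tail, which is nonnegative, and use $p_i\geq p_s$ for every $i\leq s$. This gives
\[
\sum_{i=1}^m p_i^k\geq \sum_{i=1}^s p_i^k\geq s\,p_s^k,
\]
and the right-hand side is strictly positive because $p_s>0$. Dividing the two bounds yields the claimed inequality
\[
1-R_s(k)\leq \frac{m-s}{s}\left(\frac{p_{s+1}}{p_s}\right)^k.
\]

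For the final remark, if $p_s>p_{s+1}$ then the ratio $q=p_{s+1}/p_s$ lies in $[0,1)$. The bound is therefore a constant multiple of $q^k$, which decays exponentially in $k$.

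There is no real obstacle in this argument. The only point needing care is the positivity hypothesis $p_s>0$, which guarantees that both the denominator and the ratio $p_{s+1}/p_s$ are well defined.
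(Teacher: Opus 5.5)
Your proof is correct and follows the paper's route exactly: bound the tail sum above by $(m-s)p_{s+1}^k$, bound the denominator below by the top-$s$ sum and then by $s\,p_s^k$, and divide. Your explicit remark that $p_s>0$ keeps the denominator positive is a small point the paper leaves implicit.
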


\begin{proof}
We have
\[
1-R_s(k)
=
\frac{\sum_{i=s+1}^{m}p_i^k}{\sum_{i=1}^{m}p_i^k}.
\]
Since the denominator is at least the contribution of the top $s$ states,
\[
1-R_s(k)
\leq
\frac{\sum_{i=s+1}^{m}p_i^k}{\sum_{i=1}^{s}p_i^k}.
\]
For $i>s$, $p_i\leq p_{s+1}$, so
\[
\sum_{i=s+1}^{m}p_i^k
\leq
(m-s)p_{s+1}^k.
\]
For $i\leq s$, $p_i\geq p_s$, so
\[
\sum_{i=1}^{s}p_i^k
\geq
s p_s^k.
\]
Combining the two inequalities gives
\[
1-R_s(k)
\leq
\frac{m-s}{s}
\left(\frac{p_{s+1}}{p_s}\right)^k.
\]
If $p_s>p_{s+1}$, then $p_{s+1}/p_s<1$, so the bound decays exponentially in $k$.
\end{proof}

Lemma~\ref{lem:tail-suppression} provides a finite-order version of the concentration interpretation. 
The order $k$ determines how sharply the collision probability distinguishes high-mass states from lower-mass states. 
When the leading probabilities are separated from the tail, increasing $k$ rapidly concentrates the collision statistic on those leading states.

\subsection{Effective Support at Order \texorpdfstring{$k$}{k}}

The concentration behavior can also be expressed through the effective support associated with R\'enyi entropy. 
Define
\[
S_k(P)
=
\exp(H_k(P)).
\]
For $k\neq1$,
\[
S_k(P)
=
C_k(P)^{1/(1-k)}.
\]
This quantity is often interpreted as an effective number of states at order $k$. 
At $k=1$, it reduces by continuity to the Shannon effective support,
\[
S_1(P)=\exp(H(P)).
\]
At $k=2$,
\[
S_2(P)=\frac{1}{\sum_i p_i^2},
\]
the reciprocal of the pair-collision probability. As $k\to\infty$,
\[
S_k(P)\to \frac{1}{p_{\max}}.
\]

Thus increasing $k$ decreases the effective support from a Shannon-weighted notion of diversity toward a support size determined by the largest probability. 
This gives another interpretation of collision order as a resolution parameter: larger values of $k$ resolve the distribution at a scale dominated by its largest atoms.

\subsection{Implications for Low-Order Approximations}

The low-order approximations introduced above use the values $H_2(P),H_3(P),\ldots,H_{r+1}(P)$.
The concentration results show that these values are not interchangeable observations of the same feature of the distribution. 
Each order weights the distribution differently. 
The order-two entropy is sensitive to pair collisions and reflects a relatively broad notion of concentration. 
Higher orders increasingly emphasize repeated observations from the largest probability masses.

This has two consequences. 
First, adding higher collision orders can improve the extrapolation to Shannon entropy if those orders provide useful information about the shape of the R\'enyi entropy path near $\alpha=1$. 
Second, adding higher orders also shifts the approximation toward statistics that are more sensitive to high-mass events and less sensitive to the tail. 
In finite samples, higher-order collisions are also harder to estimate, because they require repeated observations of the same state among larger tuples.

Thus the approximation order controls a tradeoff. 
Low-order approximations use statistics that are relatively stable and broadly distributed across the support, but they may retain larger interpolation error.
Higher-order approximations can better capture curvature in the R\'enyi entropy path, but they rely on collision probabilities that are more concentrated and may have higher sampling variability.

\section{Non-Identifiability from Finite Collision Moments}
\label{sec:nonidentifiability}

The preceding sections show how finite collections of collision probabilities can be used to approximate Shannon entropy. 
We now show that such finite collections cannot, in general, determine Shannon entropy exactly. 
This provides a formal limitation on any method that uses only finitely many low-order collision moments.

The key point is that collision probabilities are power sums of the probability vector. 
A sufficiently long sequence of power sums determines a finite distribution up to permutation, but a finite truncation generally leaves degrees of freedom. 
Along those remaining directions, Shannon entropy can vary.

\subsection{An Illustrative Example}

The simplest example shows that the pair-collision probability alone does not determine entropy. 
Consider the two distributions $P=\left(\frac12,\frac12,0\right)$ and $Q=\left(\frac23,\frac16,\frac16\right)$.
Both are distributions on three symbols. 
Their pair-collision probabilities are equal:
\[
C_2(P)
=
\left(\frac12\right)^2+\left(\frac12\right)^2+0^2
=
\frac12,
\]
while
\[
C_2(Q)
=
\left(\frac23\right)^2
+
\left(\frac16\right)^2
+
\left(\frac16\right)^2
=
\frac49+\frac1{36}+\frac1{36}
=
\frac12.
\]
Thus the two distributions are indistinguishable by their order-two collision probability.

However, their Shannon entropies differ: $H(P)=\log 2$, whereas $H(Q) = -\frac23\log\frac23 - 2\cdot\frac16\log\frac16$.
Therefore, knowledge of $C_2(P)$ alone is insufficient to determine $H(P)$.

The same example also shows how higher-order collisions can reveal distinctions hidden from pair collisions. The order-three collision probabilities are
\[
C_3(P)
=
\left(\frac12\right)^3+\left(\frac12\right)^3
=
\frac14,
\]
whereas
\[
C_3(Q)
=
\left(\frac23\right)^3
+
\left(\frac16\right)^3
+
\left(\frac16\right)^3
=
\frac{8}{27}
+
\frac{1}{216}
+
\frac{1}{216}
=
\frac{11}{36}.
\]
Thus $C_2(P)=C_2(Q)$, but $C_3(P)\neq C_3(Q)$. Pair collisions do not determine triplet collisions, and neither alone determines the full entropy.

\subsection{General Non-Identifiability}

The preceding example is not special. 
For any fixed number of collision moments, there are sufficiently large alphabets on which those moments do not identify Shannon entropy.

\begin{Theorem}[Finite collision moments do not determine entropy]
\label{thm:nonidentifiability}
Let $r\geq 2$ and let $m\geq r+1$. Then there exist two distinct probability distributions $P$ and $Q$ on $m$ symbols such that
\[
C_k(P)=C_k(Q),
\qquad
k=2,\ldots,r,
\]
but
\[
H(P)\neq H(Q).
\]
Consequently, Shannon entropy is not determined by the finite collision sequence
\[
C_2(P),C_3(P),\ldots,C_r(P)
\]
on the $m$-symbol simplex when $m\geq r+1$.
\end{Theorem}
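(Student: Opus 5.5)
The plan is to view the collision moments as a smooth map on the open simplex, show that their common level sets are positive-dimensional manifolds, and then show that Shannon entropy is not constant along them. Write $F(p)=(C_1(p),C_2(p),\ldots,C_r(p))$ for $p\in(0,1)^m$, where $C_1(p)=\sum_i p_i$ encodes the simplex constraint. This gives $r$ equations in $m\ge r+1$ unknowns. I fix a base point $P=(p_1,\ldots,p_m)$ in the interior of the simplex whose coordinates are pairwise distinct and strictly positive.

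First, the Jacobian of $F$ at $P$ has rows $\nabla C_k(P)=(k\,p_i^{k-1})_{i=1}^m$ for $k=1,\ldots,r$. After removing the factors $k$, these rows form an $r\times m$ Vandermonde-type matrix in the distinct nodes $p_1,\ldots,p_m$. Since $m\ge r$, it has full rank $r$. By the implicit function theorem, the level set $\mathcal M=\{q: F(q)=F(P)\}$ is, near $P$, a smooth manifold of dimension $m-r\ge1$. Its tangent space at $P$ is $T=\{v:\sum_i p_i^{k-1}v_i=0,\ k=1,\ldots,r\}$. Every point of $\mathcal M$ close to $P$ still has positive coordinates, so it is a genuine distribution with $C_k=C_k(P)$ for $k=2,\ldots,r$.

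Second, I show that $H$ is not constant on $\mathcal M$ near $P$. It suffices to find $v\in T$ with $\nabla H(P)\cdot v\ne0$. Once such a $v$ exists, I take a smooth curve $\gamma(t)\subset\mathcal M$ with $\gamma(0)=P$ and $\gamma'(0)=v$. Then $\tfrac{d}{dt}H(\gamma(t))|_{t=0}\ne0$, so some small $t$ gives $Q=\gamma(t)\ne P$ with $H(Q)\ne H(P)$.

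If no such $v$ existed, then $\nabla H(P)=(-\log p_i-1)_i$ would lie in $T^\perp=\mathrm{span}\{(p_i^{k-1})_i: k=1,\ldots,r\}$. In that case there would be a polynomial $q$ of degree at most $r-1$ with $\log p_i=q(p_i)$ for all $i$. The function $f(x)=\log x-q(x)$ would then vanish at $m\ge r+1$ distinct points of $(0,\infty)$.

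This is the step I expect to be the crux, and I would handle it with a Rolle argument. We have $xf'(x)=1-xq'(x)$, which is a polynomial of degree at most $r-1$. It is not identically zero, since its value at $0$ is $1$. Hence $f'$ has at most $r-1$ positive zeros. By Rolle's theorem, $f$ then has at most $r$ zeros in $(0,\infty)$, which contradicts having $m\ge r+1$ zeros. This contradiction gives the desired $v$ and completes the construction.

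The remaining points are routine. A base point with distinct positive coordinates exists for every $m$. The perturbation must be small enough that positivity is preserved. Distinctness of $P$ and $Q$ follows from $H(Q)\ne H(P)$.
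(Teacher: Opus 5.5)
Your proof is correct and follows the paper's argument. The level set of $(C_1,\ldots,C_r)$ is a manifold of dimension $m-r\ge 1$ by the Vandermonde rank condition and the implicit function theorem, and $H$ varies along it because $(-\log p_i-1)_i$ cannot lie in the span of the power-sum gradients. Your Rolle argument on $xf'(x)=1-xq'(x)$ makes rigorous the step the paper only asserts (that a suitable base point exists because $-\log t-1$ is not a polynomial of degree at most $r-1$), and it shows that the relation in fact fails at every base point with at least $r+1$ distinct coordinates.
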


\begin{proof}
Let
\[
\Delta_m^\circ
=
\left\{
p\in\mathbb R^m:
p_i>0,\ \sum_{i=1}^m p_i=1
\right\}
\]
denote the interior of the $m$-symbol probability simplex. Define the map
\[
F:\Delta_m^\circ\to\mathbb R^{r-1}
\]
by
\[
F(p)
=
\left(
C_2(p),C_3(p),\ldots,C_r(p)
\right).
\]
Equivalently, include the normalization condition and define
\[
G(p)
=
\left(
C_1(p),C_2(p),\ldots,C_r(p)
\right),
\]
where
\[
C_1(p)=\sum_{i=1}^m p_i=1.
\]

For $\ell=1,\ldots,r$, the gradient of $C_\ell$ in $\mathbb R^m$ is
\[
\nabla C_\ell(p)
=
\ell
\left(
p_1^{\ell-1},
p_2^{\ell-1},
\ldots,
p_m^{\ell-1}
\right).
\]
Choose a point $p\in\Delta_m^\circ$ with at least $r$ distinct coordinates. Then the matrix with rows
\[
\left(
p_1^{\ell-1},
p_2^{\ell-1},
\ldots,
p_m^{\ell-1}
\right),
\qquad
\ell=1,\ldots,r,
\]
has rank $r$ by the Vandermonde rank condition. Hence $G$ has rank $r$ at $p$.

Because $m\geq r+1$, the level set of $G$ through $p$ has positive local dimension. By the implicit function theorem, there is a smooth local manifold
\[
\mathcal M
=
\left\{
q\in\Delta_m^\circ:
C_k(q)=C_k(p),\ k=2,\ldots,r
\right\}
\]
of dimension at least $m-r\geq1$ near $p$.

It remains to show that Shannon entropy is not constant on this level set. The gradient of Shannon entropy is
\[
\nabla H(p)
=
-\left(
\log p_1+1,
\ldots,
\log p_m+1
\right).
\]
If $H$ were locally constant on $\mathcal M$, then $\nabla H(p)$ would lie in the span of the gradients
\[
\nabla C_1(p),\nabla C_2(p),\ldots,\nabla C_r(p).
\]
Equivalently, there would exist constants $a_0,\ldots,a_{r-1}$ such that
\[
-(\log p_i+1)
=
a_0+a_1p_i+\cdots+a_{r-1}p_i^{r-1},
\qquad
i=1,\ldots,m.
\]
That is, the values of $-\log t-1$ at the points $p_i$ would agree with a polynomial of degree at most $r-1$.

Choose $p$ with at least $r+1$ distinct coordinates such that this interpolation relation does not hold. Such a choice is possible because $-\log t-1$ is not a polynomial of degree at most $r-1$. Therefore,
\[
\nabla H(p)
\notin
\operatorname{span}
\{\nabla C_1(p),\ldots,\nabla C_r(p)\}.
\]
Hence there exists a tangent vector $v$ to the level set $\mathcal M$ at $p$ such that
\[
\nabla H(p)\cdot v\neq0.
\]

By the implicit function theorem, there exists a smooth curve $p(t)\subset\mathcal M$ with $p(0)=p$ and $p'(0)=v$. Along this curve,
\[
C_k(p(t))=C_k(p),
\qquad
k=2,\ldots,r,
\]
but
\[
\left.
\frac{d}{dt}H(p(t))
\right|_{t=0}
=
\nabla H(p)\cdot v
\neq0.
\]
Thus, for sufficiently small $t\neq0$, setting $Q=p(t)$ and $P=p$ gives
\[
C_k(P)=C_k(Q),
\qquad
k=2,\ldots,r,
\]
but
\[
H(P)\neq H(Q).
\]
This proves the claim.
\end{proof}

\subsection{Interpretation of Non-identifiability}

Theorem~\ref{thm:nonidentifiability} shows that the approximation error identified earlier is unavoidable in general. 
A finite collection of low-order collision probabilities defines an equivalence class of distributions. 
Within that equivalence class, Shannon entropy may vary. 
Therefore, no estimator or approximation that depends only on $C_2(P),\ldots,C_r(P)$ can recover $H(P)$ exactly over all distributions on sufficiently large alphabets.

This does not mean that low-order collision approximations are uninformative. 
Rather, it clarifies what they can and cannot do. 
They provide finite-resolution summaries of the distribution, and these summaries may approximate Shannon entropy well when the R\'enyi entropy path is sufficiently regular. 
But the finite collision sequence does not contain all information needed to determine Shannon entropy in general.

The theorem also explains why increasing sample size alone cannot remove the residual approximation error. 
Even if the collision probabilities
\[
C_2(P),\ldots,C_r(P)
\]
were known exactly, they would not generally identify $H(P)$. 
Additional samples improve estimation of those moments, but additional collision orders are needed to further refine the finite-resolution representation.

\subsection{Relation to Full Moment Recovery}

The non-identifiability result concerns finite truncations of the power-sum sequence. 
It is compatible with the classical fact that sufficiently many power sums determine a finite distribution up to permutation. 
For a distribution on $m$ symbols, the power sums
\[
C_1(P),C_2(P),\ldots,C_m(P)
\]
determine the elementary symmetric polynomials through Newton identities, and hence determine the multiset
\[
\{p_1,\ldots,p_m\}
\]
up to permutation. 
Once the probability multiset is determined, Shannon entropy is determined.

Thus the full finite-alphabet distribution can, in principle, be recovered from enough collision moments. 
The limitation arises when only a fixed low-order truncation is used. 
The framework developed in this paper sits between these two extremes: it uses a small number of directly estimable collision moments to approximate Shannon quantities, while making explicit the residual uncertainty induced by finite moment truncation.

\section{Numerical Experiments}
\label{sec:experiments}

We now illustrate the behavior of the collision-based approximations in controlled discrete settings. 
The experiments are designed to separate three effects: approximation error from using finitely many collision orders, estimation error from finite samples, and the concentration behavior induced by increasing collision order. 
The goal is not to show that collision-based approximations uniformly outperform direct estimators of Shannon entropy or mutual information. 
Rather, the experiments demonstrate the approximation--estimation tradeoff developed in the previous sections.

Throughout, logarithms are natural and all entropies are reported in nats. 
We compare the collision-based approximations with the empirical plug-in estimator and the Miller--Madow correction which target Shannon entropy or mutual information directly. 
The collision-based approximations instead target finite-resolution quantities constructed from low-order collision statistics.
Note that the subscript $r$ on $\widetilde I_r$ denotes the number of integer R\'enyi contrast values used, the largest collision order is thus $r+1$.

\subsection{Population Approximation Error}
\label{sec:population-approximation-error}

We first examine approximation error at the population level, where the joint distribution is known exactly and there is no sampling variability. 
This isolates the deterministic gap between Shannon mutual information and its finite-order collision approximations.

We begin with the binary symmetric channel. 
Let \(X\sim \mathrm{Bernoulli}(1/2)\) and let \(Y=X\oplus E\), where \(E\sim\mathrm{Bernoulli}(\epsilon)\) is independent noise. 
The joint distribution is
\[
P_{XY}
=
\begin{pmatrix}
(1-\epsilon)/2 & \epsilon/2 \\
\epsilon/2 & (1-\epsilon)/2
\end{pmatrix}.
\]
Varying \(\epsilon\) from \(0\) to \(1/2\) interpolates between perfect dependence and independence. 
For each value of \(\epsilon\), we compute the Shannon mutual information \(I(X;Y)\) and the finite-resolution collision approximations \(\widetilde I_2\), \(\widetilde I_3\), and \(\widetilde I_4\). 
Because these quantities are evaluated directly from the population distribution, any discrepancy between \(I\) and \(\widetilde I_r\) is deterministic approximation error, not sampling error.

Figure~\ref{fig:bsc}A shows that the collision approximations reproduce the qualitative dependence structure of the binary symmetric channel. 
All curves decrease from the deterministic limit at \(\epsilon=0\) to the independent limit at \(\epsilon=1/2\), and the finite-order approximations are exact at both endpoints. 
For intermediate noise levels, however, the approximations overestimate Shannon mutual information. 
This discrepancy is not caused by finite samples; it reflects the fact that a finite set of low-order collision statistics captures only a finite-resolution view of the R\'enyi contrast path.

Figure~\ref{fig:bsc}B shows the corresponding absolute errors. 
The error is small near the endpoints and largest at intermediate noise levels, where the curve relating the integer-order R\'enyi contrasts to the Shannon point has the greatest effective curvature. 
Increasing the approximation order reduces the error in this example: \(\widetilde I_2\) has the largest error, \(\widetilde I_3\) is smaller, and \(\widetilde I_4\) is smaller still. 
Thus, adding collision orders can improve the population approximation, but it does not remove the conceptual distinction between the Shannon target and the finite-resolution target. 
The remaining gap is a property of the finite-order approximation itself.

\begin{figure}[H]
\isPreprints{\centering}{} 
\includegraphics[width=\textwidth]{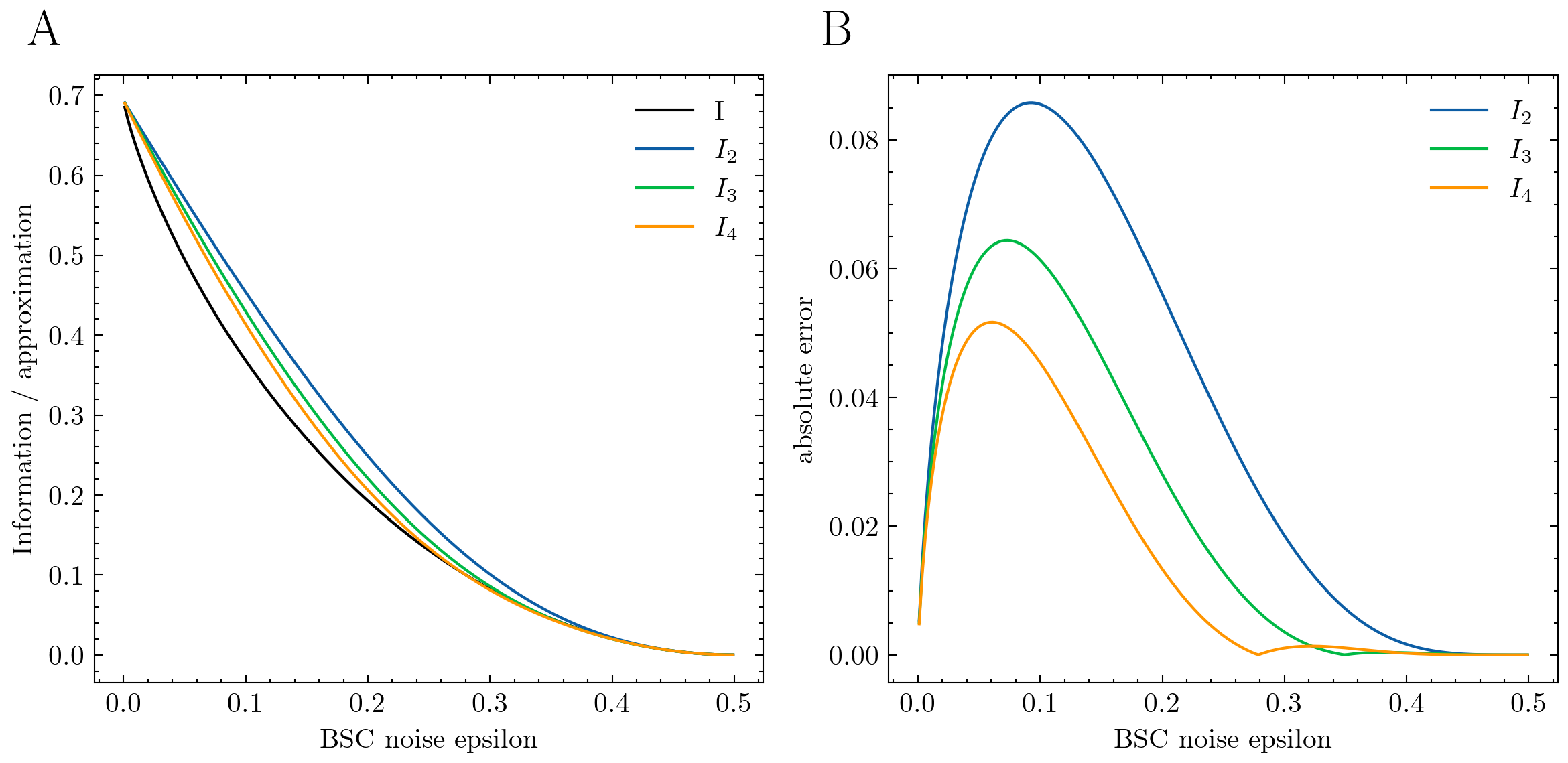}
\caption{
\textbf{Population approximation error for the binary symmetric channel}. 
A) compares the analytic Shannon mutual information \(I(X;Y)\) with finite-resolution collision approximations \(\widetilde I_2\), \(\widetilde I_3\), and \(\widetilde I_4\) as the noise level \(\epsilon\) varies from near perfect dependence to independence. 
All approximations match the endpoint behavior but overestimate \(I(X;Y)\) at intermediate noise levels. 
B) the corresponding absolute approximation errors. 
The error is largest at intermediate noise and decreases with approximation order in this example, illustrating the deterministic interpolation error that remains even when the collision probabilities are known exactly.
}
\label{fig:bsc}
\end{figure}   
\unskip

\subsection{Fixed-Order Estimation and the Approximation--Estimation Split}
\label{sec:fixed-order-sampling-experiment}

The population experiment above isolates deterministic approximation error. 
We next add sampling variability to separate approximation error from estimation error. 
For a fixed collision order \(r\), the empirical estimator \(\widehat{\widetilde I}_r\) is computed from finite samples using the without-replacement collision statistics. 
As sample size increases, these empirical collision statistics converge to their population collision probabilities, and therefore \(\widehat{\widetilde I}_r\) converges to the finite-resolution population target \(\widetilde I_r\). 
Importantly, this target need not equal the Shannon mutual information \(I\). 
Thus increasing \(N\) removes estimation error for the fixed-order collision approximation, but it does not remove the deterministic approximation error between \(\widetilde I_r\) and \(I\).

We illustrate this distinction using the binary symmetric channel with \(\epsilon=0.2\), for which the Shannon mutual information is known analytically. 
For each sample size \(N\), we generate repeated samples from the channel and estimate \(\widehat{\widetilde I}_2,\widehat{\widetilde I}_3,\widehat{\widetilde I}_4,\widehat{\widetilde I}_5\). 
We compare these estimates with their corresponding population targets \(\widetilde I_2,\widetilde I_3,\widetilde I_4,\widetilde I_5\), shown as horizontal dotted lines. 
We also include standard plug-in and Miller--Madow estimators of Shannon mutual information, with the analytic Shannon value shown as a dashed line.

Figure~\ref{fig:bsc-sampling} shows the expected fixed-order behavior. 
For each \(r\), the empirical collision estimator approaches its own population target as \(N\) increases. 
The sampling variability, shown by error bars, decreases with sample size, and the high-order estimators require larger samples before stabilizing. 
However, the limiting targets remain separated from the analytic Shannon mutual information. 
The order-two estimate stabilizes around \(\widetilde I_2\), not around \(I\); the same pattern holds for the higher orders. 
Increasing the collision order moves the population target closer to \(I\) in this example, but does not make the estimator a conventional Shannon estimator. 
This demonstrates the key approximation--estimation distinction: fixed-order collision methods consistently estimate finite-resolution information targets, while Shannon estimators such as the plug-in and Miller--Madow estimators aim directly at \(I\).

The high-order estimates also show the cost of increasing resolution. 
In this example, larger \(r\) moves the finite-resolution target closer to the Shannon mutual information, but the corresponding empirical estimator is less stable at small sample sizes. 
This is expected: higher-order collision statistics depend on rarer coincidence events, so they require more data before their empirical estimates stabilize. 
Thus increasing collision order can reduce deterministic approximation error while increasing finite-sample estimation error, making the approximation--estimation tradeoff visible in the same figure.

\begin{figure}[t]
\centering
\includegraphics[width=\textwidth]{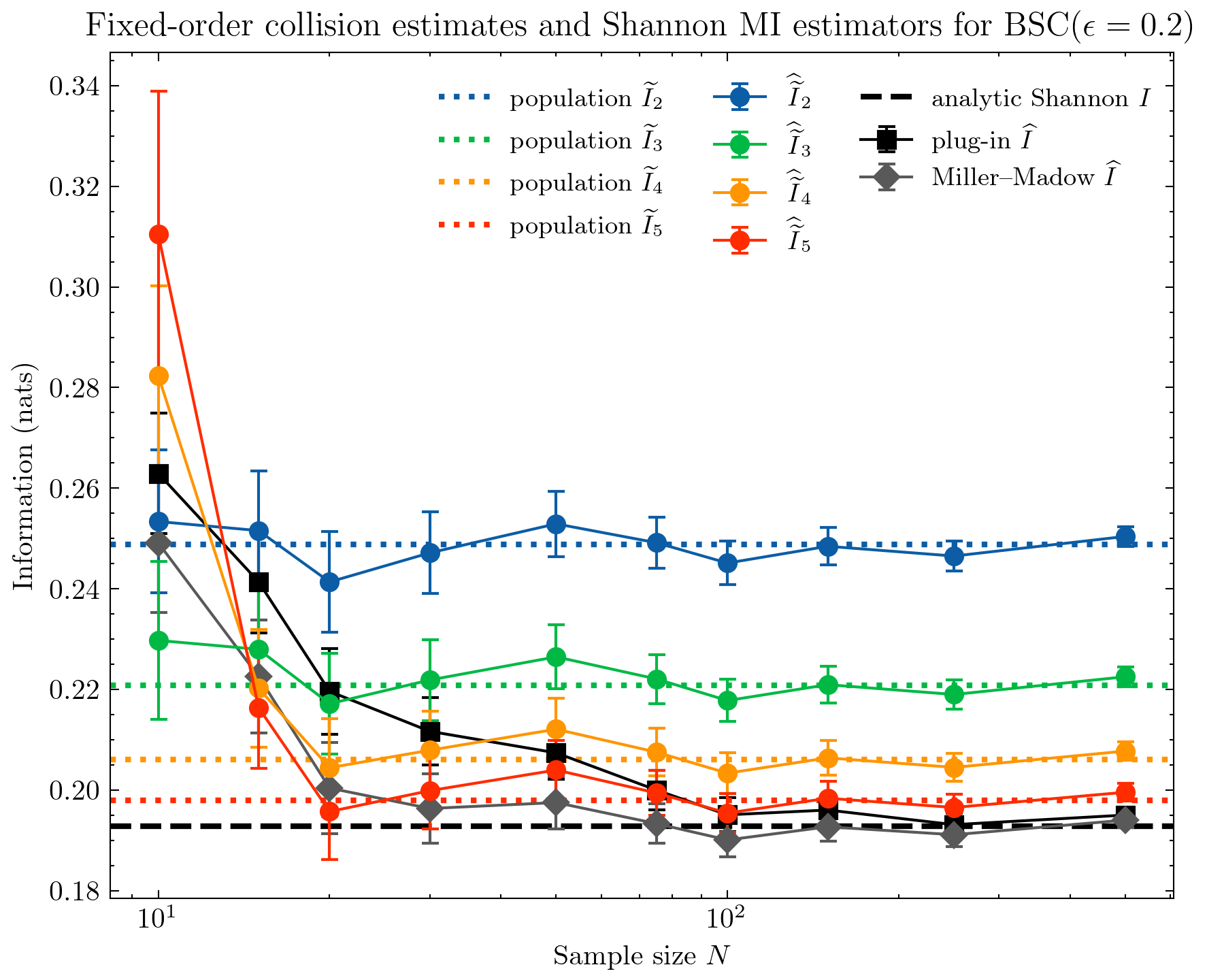}
\caption{
\textbf{Fixed-order sampling behavior for the binary symmetric channel with \(\epsilon=0.2\)}. 
For each sample size \(N\), repeated samples are drawn from the channel and used to estimate finite-resolution collision approximations \(\widehat{\widetilde I}_2,\widehat{\widetilde I}_3,\widehat{\widetilde I}_4,\widehat{\widetilde I}_5\). 
Solid colored curves show the empirical means, with error bars indicating two standard errors. 
Dotted horizontal lines show the corresponding population finite-resolution targets \(\widetilde I_2,\widetilde I_3,\widetilde I_4,\widetilde I_5\), computed from the exact joint distribution. 
The dashed black line shows the analytic Shannon mutual information \(I\), while the black and gray curves show plug-in and Miller--Madow estimates of \(I\). 
As \(N\) increases, each collision estimator converges to its own finite-resolution target rather than to the Shannon value. 
Thus sampling error vanishes with increasing \(N\), but the deterministic approximation gap between \(\widetilde I_r\) and \(I\) remains for fixed \(r\).
}
\label{fig:bsc-sampling}
\end{figure}
\unskip

\subsection{Collision Order as a Concentration Filter}
\label{sec:collision-resolution-experiment}

The collision order \(k\) controls which parts of a distribution are emphasized. 
To illustrate this effect, we constructed categorical distributions with the same support size but different concentration profiles: a uniform distribution and three increasingly tiered distributions in which probability mass is concentrated into progressively smaller high-probability groups. 
For each distribution, we computed the normalized contribution of each state to the order-\(k\) collision probability,
\[
w_i^{(k)}
=
\frac{p_i^k}{\sum_j p_j^k}.
\]
These weights identify which states are responsible for the collision statistic \(C_k(P)\). 
At low orders, many states can contribute appreciably. 
As \(k\) increases, the powers \(p_i^k\) amplify differences in state probability, shifting the collision statistic toward the highest-probability states.

Figure~\ref{fig:collision-resolution} summarizes this shift using the effective number of contributing states,
\[
N_{\mathrm{eff}}^{(k)}
=
\frac{1}{\sum_i \left(w_i^{(k)}\right)^2}.
\]
This quantity is large when many states contribute comparably to \(C_k(P)\), and small when the statistic is dominated by a few states. 
For the uniform distribution, the effective number remains equal to the full support size because all states contribute equally at every order. 
For the tiered distributions, the effective number decreases rapidly with \(k\). 
The strongest tiered distribution collapses to only a few effective contributors after small increases in collision order, while the milder tiered distributions retain broader support for longer. 
Thus even moderate collision orders act as strong concentration filters.

This result clarifies the interpretation of finite-order collision approximations. 
Adding higher orders does not simply add more information in a neutral way; it changes which parts of the distribution are being probed. 
Low-order collisions retain broader information about the support, while higher-order collisions preferentially emphasize high-mass states and coherent concentration structure. 
The order parameter \(k\) should therefore be understood as a resolution parameter: increasing \(k\) sharpens the view toward dominant events, but can also reduce sensitivity to diffuse or low-probability structure.

\begin{figure}[t]
\centering
\includegraphics[width=\textwidth]{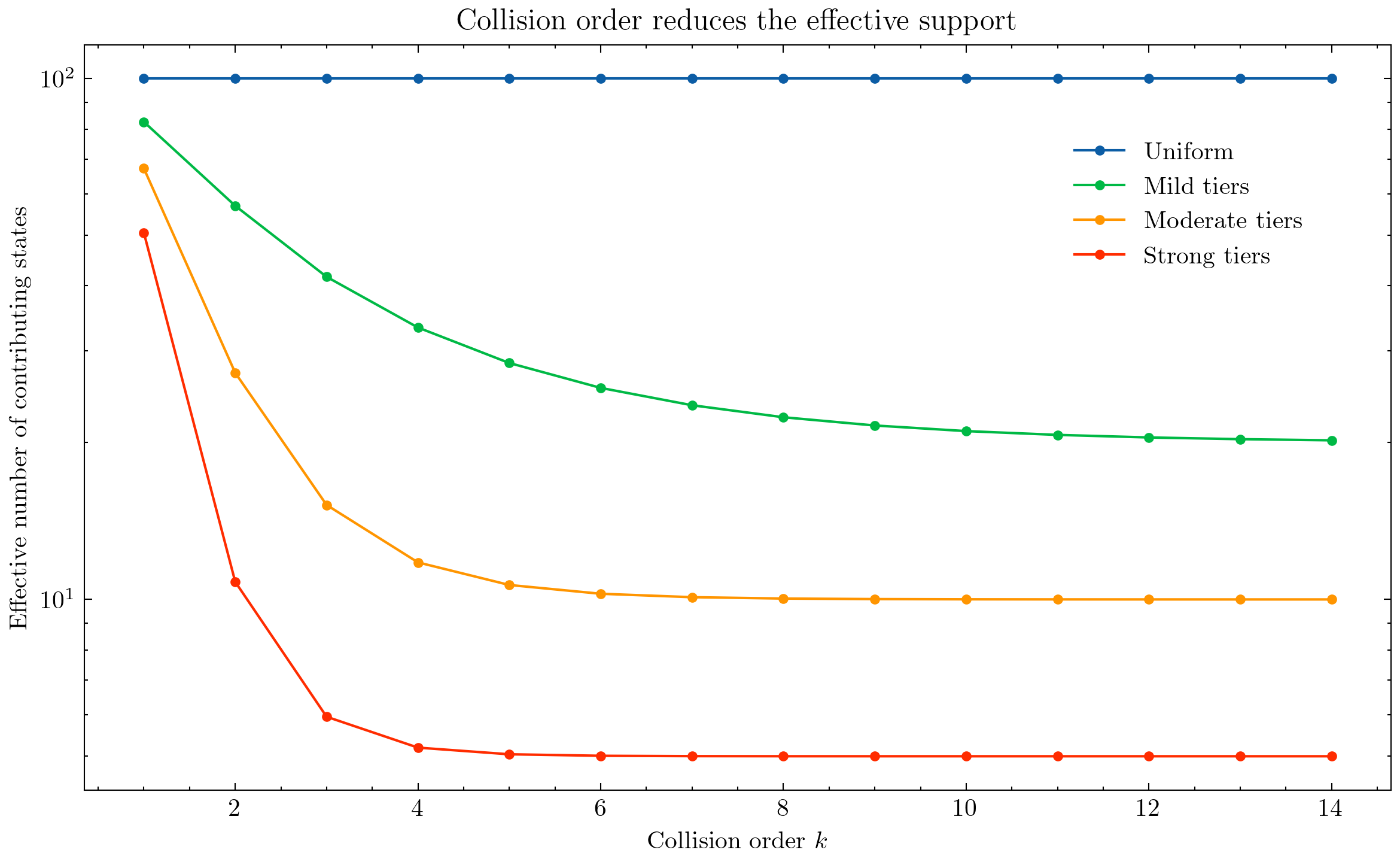}
\caption{
Collision order as a concentration filter. 
For each distribution, the state-level contribution to the order-\(k\) collision probability is normalized as \(w_i^{(k)}=p_i^k/\sum_j p_j^k\). 
The figure shows the effective number of contributing states, \(N_{\mathrm{eff}}^{(k)}=1/\sum_i (w_i^{(k)})^2\), as a function of collision order. 
For the uniform distribution, all states contribute equally and the effective number remains fixed at the support size. 
For increasingly tiered distributions, the effective number decreases with \(k\), showing that higher-order collisions increasingly concentrate on the highest-probability states. 
Thus collision order changes the resolution of the summary: low orders reflect broad support, while higher orders emphasize dominant mass.
}
\label{fig:collision-resolution}
\end{figure}
\unskip

\section{Discussion}
\label{sec:discussion}

This paper develops a finite-resolution view of entropy and mutual information based on low-order collision statistics. 
Rather than treating Shannon entropy as the only population target, we ask what can be recovered from a finite collection of directly estimable coincidence probabilities. 
This leads to a hierarchy of approximations built from integer-order R\'enyi entropies. 
The resulting quantities interpolate between two perspectives: they are grounded in the R\'enyi entropy path, but they remain tied to simple count-based collision events.

The central distinction is between estimation error and approximation error. 
For a fixed collision order, empirical collision probabilities have unbiased fixed-order estimators and converge to their population values as the sample size grows.
Consequently, the corresponding empirical collision entropies converge to fixed R\'enyi entropies, and the extrapolated quantities converge to finite-resolution population targets. 
They do not, in general, converge to Shannon entropy or mutual information. 
The residual gap is deterministic: it reflects the extent to which a finite set of low-order R\'enyi entropies captures the value of the entropy path at $\alpha=1$.

This distinction clarifies the behavior observed in the numerical experiments.
Direct estimators such as NSB are designed to estimate Shannon entropy or mutual information and can perform substantially better in sparse settings \cite{nemenman2002entropy, archer2014bayesian}. 
The collision-based quantities have a different role. 
They expose which parts of Shannon information are already visible from low-order coincidence structure, and they make explicit the residual approximation error induced by finite moment truncation. 
In this sense, they are not competitors to Bayesian entropy estimators so much as finite-resolution summaries with transparent statistical targets.

The collision order also has a structural interpretation. 
As the order increases, collision probabilities become increasingly dominated by the largest probability masses. 
This is the collision-probability counterpart of the convergence of R\'enyi entropy to min-entropy \cite{renyi1961measures, cover2006elements}. 
Thus the order parameter controls the resolution at which a distribution is viewed: low orders aggregate probability mass broadly, while higher orders emphasize repeated observations from high-probability states. 
This provides a formal basis for interpreting low-order collision approximations as scale-dependent or resolution-dependent summaries of uncertainty.

The non-identifiability result places a limit on what such summaries can accomplish.
A finite sequence of collision moments generally does not determine Shannon entropy.
Even if the collision probabilities $C_2(P),\ldots,C_r(P)$ were known exactly, multiple distributions can share those values while having different entropies. 
This shows that the approximation error is not an artifact of a particular estimator. 
It is intrinsic to representing a distribution through finitely many low-order power sums. 
At the same time, the classical relationship between power sums and finite multisets shows that sufficiently many moments can determine a finite distribution up to permutation \cite{macdonald1995symmetric}. 
The framework studied here occupies the intermediate regime: it uses only a small number of collision moments, accepting finite-resolution approximation in exchange for direct estimability and interpretability.

One natural application is the comparison of clusterings. 
Given two partitions of the same elements, the contingency table between cluster labels defines a joint empirical distribution, so standard information-theoretic clustering similarities, including mutual information, normalized mutual information, adjusted mutual information, and variation of information, can be written as functions of this joint distribution and its marginals \cite{meila2007vi, Vinh2010JMLR, gates2017impact}. 
Pair-counting measures such as the Rand index and adjusted Rand index instead depend on whether pairs of elements are co-assigned under the two partitions \cite{Rand1971, HubertArabie1985}. 
Collision probabilities provide a direct bridge between these views: pair-counting corresponds to order-two collision structure, while higher-order collisions measure agreement among triples, quadruples, and larger tuples \cite{gates2025unifying}. 
This connection complements prior work showing that clustering comparisons can be sensitive to random models and structural biases \cite{gates2017impact}, as well as element-centric approaches that compare clusterings through the relationships induced around individual elements \cite{gates2019element}. 
The present paper does not develop a clustering similarity measure directly; instead, it provides the information-theoretic substrate for a related idea: clustering similarities based on pairs, triples, and higher-order tuples can be viewed as finite-resolution points on a collision-information path. 
This suggests a principled way to interpolate between pair-counting and information-theoretic comparisons, while also clarifying the limits of any fixed-order tuple statistic.

The collision perspective also connects naturally to diversity measures in ecology and related fields. 
The order-two collision probability $C_2(P)=\sum_i p_i^2$ is the probability that two independent draws belong to the same category, and its complement is Simpson's diversity index \cite{simpson1949measurement}. 
The reciprocal $1/C_2(P)$ is commonly interpreted as an effective number of types and corresponds to the order-two Hill number \cite{hill1973diversity}. 
Thus, the pairwise collision entropy $H_2(P)$ is closely related to established diversity measures: it is the logarithm of inverse Simpson concentration. 
From this perspective, our framework extends the logic of pairwise diversity from two-sample coincidences to a hierarchy of higher-order coincidence events, and asks how these finite-order diversity summaries approximate Shannon entropy.

Several limitations remain. 
First, the interpolation bounds depend on derivatives of the R\'enyi entropy or contrast path. 
These bounds are useful conceptually, but sharper distribution-dependent bounds would make the approximation theory more predictive. 
Second, higher collision orders introduce a bias--variance tradeoff. 
They may reduce interpolation error by capturing more of the local shape of the R\'enyi path, but they also rely on rarer coincidence events and can be harder to estimate in finite samples. 
Third, the R\'enyi entropy contrast used for mutual information is not a full R\'enyi mutual information in the axiomatic or operational sense. 
Other generalizations, such as Sibson and Arimoto mutual informations, satisfy different properties and serve different purposes \cite{sibson1969information, arimoto1977information, vanerven2014renyi}. 
Here, the contrast is used because it arises naturally from combining marginal and joint collision entropies and because it has the Shannon mutual information as its order-one limit.

Future work could proceed in several directions. 
One direction is to develop sharper approximation bounds for specific distribution classes, such as heavy-tailed distributions, near-uniform distributions, or structured joint distributions. 
A second direction is to study adaptive schemes that choose collision orders based on sample size and observed concentration. 
A third direction is to connect the finite-resolution collision framework to hypothesis testing, where low-order collision contrasts may provide simple statistics for detecting particular forms of dependence. 
Finally, the clustering setting suggests a broader program of tuple-order similarity measures that explicitly control the resolution at which two partitions are compared.

Overall, the collision perspective provides a simple way to separate what is statistically estimable from what is informationally identifiable. 
Low-order collision statistics are not sufficient to determine Shannon entropy in general, nor are they intended to replace direct estimators of Shannon information. 
Their value is that they define a transparent hierarchy of finite-resolution information summaries. 
This hierarchy links coincidence counts, R\'enyi entropy, Shannon limits, and mutual information within a single approximation--estimation framework.

\vspace{6pt} 





\authorcontributions{A.J.G. conceived the study, developed the theoretical framework, derived the main results, designed the numerical experiments, interpreted the results, and wrote the manuscript.}

\funding{The author was supported in part by the National Security Data \& Policy Institute, Contracting Activity \#2024-24070100001.}

\dataavailability{No empirical datasets were analyzed. 
The numerical experiments use synthetic distributions generated from the specifications described in the text. 
Code for reproducing the figures is available at \href{https://github.com/ConnectedDataHub/collision-information}{https://github.com/ConnectedDataHub/collision-information}.}

\acknowledgments{The author would like to thank great conversations with students in his research group, the Connected Data Hub. }

\conflictsofinterest{The authors declare no conflicts of interest.}

\isPreprints{}{
\begin{adjustwidth}{-\extralength}{0cm}
} 

\reftitle{References}


\bibliography{collisionapproxref}

%


\isPreprints{}{
\end{adjustwidth}
} 
\end{document}